\newtheorem{theorem}{Theorem}
\newenvironment{proof}[1][Proof]{\begin{trivlist}
\item[\hskip \labelsep {\bfseries #1}]}{\end{trivlist}}
\newenvironment{example}[1][Example]{\begin{trivlist}
\item[\hskip \labelsep {\bfseries #1}]}{\end{trivlist}}
\newenvironment{remark}[1][Remark]{\begin{trivlist}
\item[\hskip \labelsep {\bfseries #1}]}{\end{trivlist}}
\newenvironment{corollary}[1][Corollary]{\begin{trivlist}
\item[\hskip \labelsep {\bfseries #1}]}{\end{trivlist}}
\newcommand{\qed}{\nobreak \ifvmode \relax \else
      \ifdim\lastskip<1.5em \hskip-\lastskip
      \hskip1.5em plus0em minus0.5em \fi \nobreak
      \vrule height0.75em width0.5em depth0.25em\fi}
\newcommand{\miktex}{\hbox{Mik\kern-.15em\TeX}}
\title{Recurrence relations for the number of solutions \\ 
of a class of Diophantine equations}
\author[1,2]{M. I. Krivoruchenko}
\affil[1]{
Institute for Theoretical and Experimental Physics$\mathrm{,}$ B. Cheremushkinskaya 25 \\ 
117218 Moscow, Russia }
\affil[2]{Department of Nano-$\mathrm{,}$ Bio-$\mathrm{,}$ Information and Cognitive Technologies\\ 
Moscow Institute of Physics and Technology$\mathrm{,}$ 9 Institutskii per. \\ 
141700 Dolgoprudny$\mathrm{,}$ Russia}
\keywords{Diophantine equations, the number of solutions, recursion, generating function, Bell polynomials}
\pacs{71.15.Dx, 03.65.Fd, 21.60.Fw}
\begin{document}
\maketitle
\begin{abstract}
Recursive formulas are derived for the number of solutions of linear and 
quadratic Diophantine equations with positive coefficients. This result is further 
extended to general non-linear additive Diophantine equations. It is shown that all three 
types of the recursion admit an explicit solution in the form of complete Bell polynomial, 
depending on the coefficients of the power series expansion 
of the generating functions for the sequences of individual terms in the Diophantine equations.
\end{abstract}

\section{Introduction}

Diophantine equations are encountered in theory of partitions, combinatorial analysis, 
integer linear programming, and in many related areas \cite{MORD1969, NATH2000}.
Although not as common, Diophantine equations occur in physical applications. 
We mention the problems in solid state physics \cite{AVRO1986}-\cite{PELE1998}
and theory of angular momentum \cite{BREM1986,SRIN1988}.
A new application area emerged recently in the field of nuclear physics in connection with the problem 
of calculating degeneracy of the symmetry group reduction chains \cite{Gheo2004}.

In this paper the problem of estimating the number of solutions of Diophantine equations 
is discussed from the viewpoint of Ward identities,
successfully used earlier to establish relationships between different 
kinds of invariant integrals on continuous groups (see, e.g., \cite{WHLM2007}).
In Sects.~2 and~3, we consider linear and quadratic Diophantine equations and 
derive recursive formulas for the number of solutions. 
In Sect. 4 we show that the number of solutions of general non-linear additive Diophantine 
equations can also be calculated recursively, 
which involves the partial Bell polynomials 
evaluated at 
the first expansion coefficients of the generating functions
for the sequences of individual terms in the equation.
In conclusion of Sect. 4 we show, furthermore, 
that the number of solutions is equal to
the complete Bell polynomial 
evaluated at
the first expansion coefficients of the logarithm of the full generating function.

\section{Linear Diophantine equation}

We consider the linear Diophantine equation
\begin{equation}
a_{1}k_{1}+\ldots +a_{r}k_{r}=n  \label{1}
\end{equation}
with positive integer coefficients $a_{l} \in \mathbb{N}$ ($l=1,\ldots ,r$) and an integer $n$. 
We are looking for the number $\nu_{r}(n)$ 
of non-negative integer solutions $k_{l} \in \mathbb{N}_0$ ($l=1,\ldots ,r$) 
of this equation.

\begin{theorem}
The number of non-negative solutions of the linear Diophantine
equation (\ref{1}) with positive integer coefficients $a_{l}$ can be
obtained by the following recursive formula: 
\begin{equation}
\nu _{r}(n)=\frac{1}{n}\sum_{l=1}^{r}a_{l}\sum_{i=1}^{[n/a_{l}]}\nu_{r}(n-ia_{l}),  
\label{RE1}
\end{equation}
with the initial conditions $\nu _{r}(n)=0$ for $n<0$ and $\nu _{r}(0)=1$.
\label{THEOREM1}
\end{theorem}

\begin{proof}
The number of solutions can be found by enumerating all non-negative values $k_l$
and selection of those combinations that satisfy Eq.~(\ref{1}).  
This is achieved through the following algebraic structure:
\begin{equation}
\nu _{r}(n)=\sum_{k_{1}\ldots k_{r}}\delta (a_{1}k_{1}+\ldots +a_{r}k_{r}-n).
\label{2}
\end{equation}
The selection of the suitable combinations is carried out with the use of the Kronecker delta 
\[
\delta (m-n)=\left\{ 
\begin{array}{ll}
1, & m=n, \\ 
0 & m\neq n.
\end{array}
\right. 
\]

We represent the Kronecker delta in the form of a contour intergal: 
\[
\delta (m-n)=\frac{1}{2\pi i}\oint \frac{dz}{z^{n+1}}z^{m}, 
\]
where the integration is counterclockwise in a neighborhood of $z=0$. 
With the use of this representation, Eq.~(\ref{2}) can be written
in the form 
\begin{equation}
\nu _{r}(n)=\frac{1}{2\pi i}\oint \frac{dz}{z^{n+1}}\sum_{k_{1}\ldots
k_{r}}z^{a_{1}k_{1}+\ldots +a_{r}k_{r}}.  \label{2a}
\end{equation}
For $|z| < 1$ the series converges, and the order of summation and integration can be changed.
The summations over $k_{l}$ are independent. There are $r$ of the
summations, and each is a geometric progression. The integral takes the form
\begin{equation}
\nu _{r}(n)=\frac{1}{2\pi i}\oint \frac{dz}{z^{n+1}}\varphi (z),  
\label{3}
\end{equation}
where
\begin{equation}
\varphi (z)=\prod_{l=1}^{r}\frac{1}{1-z^{a_{l}}}.  
\label{GENE}
\end{equation}
The function $\varphi (z)$ is analogous to Euler's generating function in the number theory and 
the probability generating function in the theory of probability.
Expression (\ref{3}) gives the derivative of $\varphi (z)$: 
\[
\nu _{r}(n) = \frac{1}{n!} \frac{d^{n}}{dz^{n}}\varphi (0). 
\]
Inside the unit circle the integrand $\varphi (z)/z^{n + 1}$ is the analytic function 
with the pole at $z=0$. To get a recursion, one can proceed by any of the following ways:

Firstly, one can integrate by parts. Secondly, 
one can exploit the analyticity of $\varphi (z)/z^{n + 1}$. 
The radius of the path of integration in the neighborhood $z=0$ does not affect the result.
Introducing the radius of the circle explicitly and differentiating it, 
we obtain a recursive formula that coincides with the formula obtained by using the integration by parts.
Such techniques were used earlier, e.g. in Refs. \cite{RADU2012,KRIV2012}, to get recursive formulae 
for the normalization of particle-number-projected BCS wave function and for the probability
distribution of the number of electron-positron pairs created in an external
electric field.
Thirdly, the integration over $dz/z$ is in fact the invariant integration in $U(1)$ group. By properties of the group integral, phase
transformation $z\rightarrow ze^{i\chi }$ does not affect the value of integral. This yields an identity 
\begin{equation}
\nu _{r}(n) = \frac{1}{2\pi i}\oint \frac{dz}{nz^{n}} \frac{d \ln(\varphi (z))}{dz} \varphi (z), 
\label{recu}
\end{equation}
which is the special case of Ward identity. 

The terms $1/(1-z^{a_{k}})$ originating from the logarithmic derivative of $\varphi (z)$
can be expanded in a power series over $z^{a_{k}}.$ The series is truncated bacause the singularity 
of the integrand is a finite-order pole at $z=0$. 
The expansion terms of very high order remove such a singularity, so that the corresponding contour integrals vanish.
The sum over $i$
is therefore within the limits $1\leq i\leq [n/a_{l}].$ Comparing the
different terms with Eq.~(\ref{3}), we notice that each term of the
expansion represents $\nu _{r}(m)$ for some $m<n$. We thus arrive at Eq.~(\ref{RE1}).

The initial condition for the recursion $\nu _{r}(0)=1$ is obvious. There
are no solutions of the equation for negative $n$, so $\nu _{r}(n)=0$ for $n<0$. 
\end{proof}

Theorem 1 has
 
\begin{corollary}
The recursive formula (\ref{RE1}) can be written in the form 
\begin{equation}
\nu _{r}(n)=\frac{1}{n}\sum_{m = 1}^{n}\rho (m)\nu _{r}(n-m),  \label{21}
\end{equation}
where 
\begin{equation}
\rho (m)=\sum_{l=1}^{r}a_{l}\sum_{i \geq 1}\delta (m-ia_l)  \label{22}
\end{equation}
is the sum of coefficients $a_{l}$ that are divisors of $m$.
\end{corollary}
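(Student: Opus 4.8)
The plan is to derive (\ref{21}) from (\ref{RE1}) by a single reindexing of the double sum, grouping the terms by the common value of the shift $ia_l$ rather than by the index $l$. In the inner sum of (\ref{RE1}) each pair $(l,i)$ contributes a term $a_l\,\nu_r(n-ia_l)$ whose argument is governed by the product $ia_l$, so I would introduce the new summation variable $m=ia_l$ and collect together all pairs $(l,i)$ that produce the same $m$.

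First I would insert a trivial resolution of identity into (\ref{RE1}). For any pair $(l,i)$ occurring there one has $1\leq i\leq[n/a_l]$, hence $1\leq ia_l\leq n$, and therefore $\sum_{m=1}^{n}\delta(m-ia_l)=1$, since exactly one integer $m$ in the range $1\leq m\leq n$ equals $ia_l$. Multiplying every summand of (\ref{RE1}) by this factor and interchanging the finitely many summations so as to bring the sum over $m$ outside gives
\[
\nu_r(n)=\frac{1}{n}\sum_{m=1}^{n}\left(\sum_{l=1}^{r}a_l\sum_{i=1}^{[n/a_l]}\delta(m-ia_l)\right)\nu_r(n-m).
\]

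Next I would simplify the bracketed coefficient of $\nu_r(n-m)$. The point is that the cutoff $i\leq[n/a_l]$ is redundant once $m\leq n$ is imposed by the outer sum: if $\delta(m-ia_l)\neq0$ with $m\leq n$, then $ia_l=m\leq n$ already forces $i\leq n/a_l$, i.e.\ $i\leq[n/a_l]$. The inner sum over $i$ may therefore be extended to all $i\geq1$ without altering its value, and the bracket becomes exactly $\rho(m)=\sum_{l=1}^{r}a_l\sum_{i\geq1}\delta(m-ia_l)$ of (\ref{22}), yielding (\ref{21}).

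Finally I would record the arithmetical content of $\rho(m)$: the relation $m=ia_l$ for some positive integer $i$ holds precisely when $a_l$ divides $m$ (with $i=m/a_l$), so $\rho(m)$ is the sum of those coefficients $a_l$ that are divisors of $m$, each counted with the multiplicity with which it appears among $a_1,\ldots,a_r$. Since the argument merely regroups finitely many terms, there is no analytic obstacle; the only point that demands care---and the sole place where the statement could fail if phrased loosely---is checking that replacing the $l$-dependent upper limit $[n/a_l]$ by the unbounded sum $\sum_{i\geq1}$ introduces no spurious contributions, which is ensured by the bound $m\leq n$ furnished by the outer summation.
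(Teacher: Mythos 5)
Your proof is correct: the insertion of $\sum_{m=1}^{n}\delta(m-ia_l)=1$, the interchange of the finite sums, and the observation that the bound $m\leq n$ makes the cutoff $i\leq[n/a_l]$ redundant together constitute exactly the regrouping by $m=ia_l$ that the paper treats as immediate (it states the corollary without proof). Your care about the upper limit $[n/a_l]$ versus $\sum_{i\geq1}$ is the one point worth spelling out, and you handle it correctly.
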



\begin{remark}
Theorem 1 generalizes Theorem 15.1 of Ref. \cite{NATH2000}. 
\end{remark}

\begin{remark}
A partition of $n$ is a nonincreasing sequence of positive integers whose
sum equals $n$. For $a_l = l$, the number of partitions of $n$ is given by
the corresponding coefficient in the expansion of the generating function (%
\ref{GENE}) in a power series in the neighborhood of $z=0$ (Euler's theorem). 
This number is equal to the number of distinct non-negative solutions of the Diophantine
equation (\ref{RE1}) with the coefficients $a_l = l$. The one-to-one
correspondence between the partitions of $n$ and the solutions of 
the Diophantine equation (\ref{RE1}) is achieved
by interpreting the variable $k_l$ as the number of times of occurrence 
of the number $l$ in the partition of $n$.
\end{remark}

\begin{remark}
In the asymptotic regime $n\to \infty $, the summation over the index $i$ in
Eq.~(\ref{RE1}) can be replaced by an integral. The derivative of Eq.~(\ref{RE1})
in $n$ leads then to an equation that has the solution 
\[
\nu _{r}(n)\sim C_{r}n^{r-1}. 
\]
The coefficient $C_{r}$ can be found from
\[
\nu _{r}(n)=\sum_{i=0}^{[n/a_{r}]}\nu _{r-1}(n-ia_{r}). 
\]
In the continuum limit, the sum is replaced by an integral, which gives 
$C_{r}^{-1}=(r-1){a_{r}}C_{r-1}^{-1}.$ By lowering further the index $r$, 
we obtain, for arbitrary coefficients $a_{l}$, 
\[
C_{r}^{-1}=(r-1)!\prod_{l=1}^{r}{a_{l}}C_{0}^{-1}. 
\]
In the case where $a_{l}$ are coprime, $C_{0}=1$. \cite{SCHU26}

If the coefficients contain only one common divisor, $d$, and $a_l/d$ are coprime,
the problem reduces to the case of the coprime coefficients. In such a case, 
$C_{0} = d$, when the $n = 0$~(mod~$d$) and $C_{0} = 0$, when $n \neq 0$~(mod~$d$). 
After averaging $n$ over an interval $\Delta n > d$, we get $C_{0} = 1$. 
The equality $C_{0} = 1$ also holds when only a part of the coefficients 
has a single common divisor. To see this, we use the equation 
\[
\nu _{p+q}(n) = \sum_{s=0}^{[n/d]} \nu_p(n - ds)\nu_q(s). 
\]
A similar recursion can be found in Ref. \cite{Gheo2004}.
The first $p$ coefficients are coprime, while the last $q$ coefficients have
a single common divisor $d$. $\nu_q(s)$ counts the number of solutions of
Eq.~(\ref{1}) with the integer coprime coefficients $a_l/d$ ($l=p+1,\ldots,p+q$).
\end{remark}

\begin{example}
Consider the problem of finding the number of distinct terms in the
expansion of the determinant in the sum of products of traces of
powers of the matrix. The determinant is represented as follows \cite
{KOND1992} 
\begin{equation}
\det \left\| A\right\| =\sum_{k_{1}\ldots k_{n}}\prod_{l=1}^{n}\frac{%
(-1)^{k_{l}+1}}{l^{k_{l}}k_{l}!}\mathrm{tr}(A^{l})^{k_{l}}\mathrm{,}
\label{trace}
\end{equation}
where the admissible sets of non-negative $k_{l}$ over which we take the
summation are determined by solutions of Eq.~(\ref{1}) with $a_{l}=l$ and $%
r=n.$ The number of the various terms is given by Eq.~(\ref{RE1}). For
matrices with low dimensionality, we obtain $\nu
_{n}(n)=2,3,5,7,11,15,22,\ldots $ for $n=2,3,4,5,6,7,8,\ldots $,
respectively. As previously noted, the number of solutions of Eq.~(\ref{RE1}%
) with the coefficients $a_{l}=l$ coincides with the number of partitions of 
$n$. The asymptotic behavior of $\nu_{n}(n)$ for $n \to \infty$ has the form 
\cite{NATH2000} 
\[
\nu_{n}(n) \sim \frac{1}{4n\sqrt{3}}\exp(\pi\sqrt{2n/3}). 
\]

The number of terms on the right side of Eq.~(\ref{trace}) is
growing sub-exponentially. 
In terms of computing, the most economical method to calculate determinants 
is the Gauss elimination method, 
which requires a polynomially large number of operations. The decomposition (\ref{trace}) is of interest 
when there is a symmetry and a need to preserve it at each step of the calculation.
\end{example}

\begin{example}
A similar combinatorial problem arises in calculating the derivative of
composite functions. This problem leads to the Fa\`a di Bruno's formula (see, e.g., \cite{GRAD2007}) 
\begin{equation}
\frac{1}{r!}\frac{d^{r}}{dz^{r}}f(g(z))=\sum_{k_{1}...k_{r}} f^{(k_{1} +
k_{2} ... + k_{r})}(g(z)) \prod_{l=1}^{r} \frac{1}{{l!}^{k_{l}} k_{l}!}
(g^{(l)}(z))^{k_{l}} ,  
\label{deriva}
\end{equation}
where $f^{(m)}$ and $g^{(m)}$ are the $m$-order derivatives; the summation
is over all sets of the non-negative $k_{1},...,k_{r}$ 
that satisfy Eq.~(\ref{1}) with $a_l = l$. 
The number of terms in the right side of Eq.~(\ref{deriva}) and the asymptotic behavior
are those of the trace decomposition (\ref{trace}).
\end{example}

\begin{example}
As another application, one can mention the Bell polynomials \cite{BELL28}
in terms of which the right sides of Eqs.~(\ref{trace}) and (\ref{deriva}) can be expressed. 
The number of terms in the $n$-th complete Bell polynomial is equal to the number of solutions
to Eq.~(\ref{1}) with $a_l = l$ and $r = n$.
\end{example}

\begin{example}
For the case of $a_l = 1$ one may find an explicit expression for $\nu_{r}(n)$ from Eq.~(\ref{2a}). 
By moving the contour to infinity, we obtain 
\begin{equation}
\nu_{r}(n) = \frac{(n + r - 1)!}{n!(r - 1)!}.  
\label{simple}
\end{equation}
For $a_l = 1$ this equation solves the recursion (\ref{RE1}). 
It can be noted that $\nu_{r}(n)$ is equal to 
the number of independent components of a rank-$n$ symmetric tensor in the space of dimension $r$.
The solutions $a_l = 1$
differ from the solutions $a_l = l$ combinatorially in the sense that all
the partitions of $a_l = 1$ are considered to be different. For instance, 
$1 + 1 + 3 = 5$ and $1 + 3 + 1 = 5$ are counted as distinct partitions of 5, 
whereas in the case of $a_l = l$ these partitions are counted as the one with 
$k_1 = 2,\;k_2 = 0,\;k_3 = 1$.
\end{example}

\begin{example}
We consider the random walk of a particle on the one-dimensional lattice. 
Suppose that the probability distribution in one step is described 
by the Poisson distribution 
\begin{equation}
p_1(n) = \frac{\alpha^n}{n!}\exp(-\alpha).  
\label{h1}
\end{equation}
The parameter $\alpha$ is the average displacement in one step. 
Condition $a_l > 0$ means that the particle is moving forward.
The probability generating function is of the form 
\begin{equation}
\varphi(z) = \exp (\sum_{l=1}^{r}(z^{a_{l}}-1)\alpha ).
\label{h2}
\end{equation}
The probability of displacement in $r$ steps by distance $n$ 
is given by the right side of Eq.~(\ref{3}). Applying the Ward identity, 
we find a recursive formula
\begin{equation}
p_{r}(n)=\frac{\alpha }{n}\sum_{l=1}^{r}a_{l}p_{r}(n-a_{l}).  
\label{h3}
\end{equation}
The initial conditions are $p_{r}(n) = 0$ for $n<0$ and $p_{r}(0) = (p_{1}(0))^r =\exp(-\alpha r)$. 
\end{example}

\section{Quadratic Diophantine equation}

We consider the quadratic Diophantine equation 
\begin{equation}
a_{1}k_{1}^{2}+\ldots +a_{r}k_{r}^{2}=n  
\label{DE2}
\end{equation}
for positive integer coefficients $a_l \in \mathbb{N}$ 
and $k_{l} \in \mathbb{Z}$ ($l=1,\ldots ,r$). 
The main result of this section can be summarized in

\begin{theorem}
The number of solutions of the Diophantine equation (\ref{DE2}) 
with positive coefficients $a_{l}$ can be obtained with the help 
of the recursive formula 
\begin{equation}
\nu _{r}(n) = \frac{1}{2n}\sum_{l=1}^{r}a_{l}\sum_{pq}
\left( -1+(-1)^{p-1}+2(-1)^{q-1}+2(-1)^{p+q}\right) p \nu_{r}(n-a_{l}pq).
\label{RE2}
\end{equation}
The summation is over the indices $l$ from one to $r$ and $p$ and $q$ from one to $pq \leq [n/a_l]$.
The initial condition is $\nu _{r}(0)=1$, and $\nu _{r}(n)$ is taken to be 0 if $n<0$.
\label{THEOREM2}
\end{theorem}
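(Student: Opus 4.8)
The plan is to follow the same generating-function strategy that proved Theorem~\ref{THEOREM1}, adapting it to the quadratic case. First I would represent the number of solutions via the Kronecker delta as in Eq.~(\ref{2}), but now summing over $k_l \in \mathbb{Z}$ rather than $\mathbb{N}_0$, since the variables in Eq.~(\ref{DE2}) range over all integers. Replacing the delta by the contour integral $\frac{1}{2\pi i}\oint dz\, z^{a_1 k_1^2 + \cdots + a_r k_r^2 - n - 1}$ and interchanging summation with integration for $|z|<1$, the summations over the $k_l$ decouple, yielding
\begin{equation}
\nu_r(n) = \frac{1}{2\pi i}\oint \frac{dz}{z^{n+1}}\,\varphi(z), \qquad
\varphi(z) = \prod_{l=1}^{r}\theta_l(z),
\label{planquad}
\end{equation}
where each factor is the one-sided theta-type series $\theta_l(z) = \sum_{k\in\mathbb{Z}} z^{a_l k^2}$. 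This is the quadratic analogue of Eq.~(\ref{GENE}).

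Next I would apply the Ward identity exactly as in Eq.~(\ref{recu}): invariance of the $U(1)$ integral under $z \to z e^{i\chi}$ gives
\begin{equation}
\nu_r(n) = \frac{1}{2\pi i}\oint \frac{dz}{n\,z^{n}}\,\frac{d\ln\varphi(z)}{dz}\,\varphi(z).
\label{planward}
\end{equation}
The logarithmic derivative is $\frac{d\ln\varphi}{dz} = \sum_{l=1}^{r} \theta_l'(z)/\theta_l(z)$, so the core of the proof reduces to computing the power-series expansion of $\theta_l'(z)/\theta_l(z)$ in $z^{a_l}$. Writing $\theta_l(z) = 1 + 2\sum_{k\geq 1} z^{a_l k^2}$, the numerator gives $z\theta_l'(z) = 2a_l \sum_{k\geq 1} k^2 z^{a_l k^2}$, and the reciprocal $1/\theta_l(z)$ must be expanded via its geometric series. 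Collecting powers of $z^{a_l}$ and reading off the coefficient attached to each $\nu_r(n-a_l m)$ is what will produce the recursion; the truncation $1 \le pq \le [n/a_l]$ follows, as before, from the finite order of the pole at $z=0$.

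The main obstacle is the combinatorial bookkeeping that generates the peculiar weight $-1+(-1)^{p-1}+2(-1)^{q-1}+2(-1)^{p+q}$ and the factorization of the exponent as $m = pq$. The key identity I expect to need is a Lambert-type expansion of the logarithmic derivative of the theta series: the ratio $\theta_l'/\theta_l$ should be expressible as a double sum over $p$ and $q$ in which $p$ tracks the divisor structure coming from the series coefficients $k^2$ and $q$ indexes the geometric expansion of $1/\theta_l$. Concretely, I would conjecture an identity of the form $z\,\theta_l'(z)/\theta_l(z) = a_l \sum_{p,q\geq 1} c(p,q)\, p\, z^{a_l p q}$ with $c(p,q) = \tfrac12\big(-1+(-1)^{p-1}+2(-1)^{q-1}+2(-1)^{p+q}\big)$, and verify it by manipulating the Lambert series; the alternating signs arise naturally from the factor of $2$ in $\theta_l = 1 + 2\sum z^{a_l k^2}$ and from expanding $1/(1 + 2\sum\cdots)$.

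Once this expansion is established, each monomial $z^{a_l pq}$ in the integrand shifts the pole and, by comparison with Eq.~(\ref{planquad}), reproduces $\nu_r(n - a_l pq)$, so that substituting into Eq.~(\ref{planward}) and summing over $l$, $p$, and $q$ yields Eq.~(\ref{RE2}) directly. The initial conditions $\nu_r(0)=1$ and $\nu_r(n)=0$ for $n<0$ follow for the same elementary reasons as in Theorem~\ref{THEOREM1}: the empty equation has the single solution with all $k_l=0$, and no integer combination of positive squares is negative. Verifying the precise coefficient $c(p,q)$ against low-order terms of the theta-series expansion is the step I would carry out most carefully, since the entire nonstandard form of the recursion hinges on getting that weight exactly right.
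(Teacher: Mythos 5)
Your scaffolding matches the paper's exactly: the contour-integral representation with $k_l \in \mathbb{Z}$, the factorized generating function $\varphi(z)=\prod_{l=1}^{r}\theta_l(z)$ with $\theta_l(z)=\sum_{k\in\mathbb{Z}}z^{a_l k^2}$ (the paper writes this as the Jacobi theta function $\vartheta(1,z^{a_l})$), and the Ward identity reducing everything to the power-series expansion of $\theta_l'/\theta_l$. You have also correctly identified where all the difficulty lives, and even the correct target identity: $z\,\theta_l'(z)/\theta_l(z)=a_l\sum_{p,q\geq 1}c(p,q)\,p\,z^{a_l pq}$, where your $c(p,q)$ simplifies to $-1$ for even $p$ and $2(-1)^{q-1}$ for odd $p$. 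But your plan for \emph{proving} that identity --- write $\theta_l=1+2\sum_{k\geq 1}z^{a_l k^2}$, expand $1/\theta_l$ as a geometric series in $2\sum_{k\geq 1}z^{a_l k^2}$, convolve with $z\theta_l'$, and collect powers --- is a genuine gap, not a workable route. The coefficients of that reciprocal expansion are signed counts of ordered representations of integers as sums of positive squares, and the claim that their convolution with $2a_l\sum_{k\geq 1}k^2 z^{a_l k^2}$ collapses to a divisor-structured double sum with weights $\pm 1,\pm 2$ is exactly as deep as the identity you are trying to establish; it cannot be extracted from that bookkeeping. Checking low-order coefficients, which you propose to do ``most carefully,'' verifies the conjecture at finitely many orders but proves nothing.

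The missing idea is the Jacobi triple product identity, which is precisely how the paper closes this gap: setting $w=1$ in $\vartheta(w,u)=\prod_{m=1}^{\infty}(1-u^{2m})(1+w^{2}u^{2m-1})(1+w^{-2}u^{2m-1})$ gives $\vartheta(1,u)=\prod_{m\geq 1}(1-u^{2m})(1+u^{2m-1})^{2}$. A Lambert-type (divisor-structured) expansion of a logarithmic derivative is essentially equivalent to the function admitting such an infinite product; once you have the product, the logarithmic derivative is a sum of elementary pieces $-2m\,u^{2m-1}/(1-u^{2m})$ and $2(2m-1)\,u^{2m-2}/(1+u^{2m-1})$, each of which expands as an honest geometric series. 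Re-indexing with $p=2m$, $q=s+1$ in the first family (contributing $-p\,u^{pq-1}$) and $p=2m-1$, $q=s+1$ in the second (contributing $2(-1)^{q-1}p\,u^{pq-1}$) reproduces exactly the weight in Eq.~(\ref{RE2}), and substituting $u=z^{a_l}$ into the Ward identity finishes the proof. So your framework is sound and your conjectured identity is correct, but without invoking the triple product (or some equivalent known expansion of $\vartheta'/\vartheta$) the central step of your argument remains unproved.
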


\begin{proof} 
The number of solutions of (\ref{DE2}) is written as a contour integral
\begin{equation}
\nu _{r}(n)=\frac{1}{2\pi i}\oint \frac{dz}{z^{n+1}}\sum_{k_{1}\ldots
k_{r}}z^{a_{1}k_{1}^{2}+\ldots +a_{r}k_{r}^{2}}.  
\label{DE2.cont}
\end{equation}
This representation is valid for $a_l > 0$, 
because the series with a non-positive $a_l$ diverges, when the series with positive $a_l$ 
converge, and \textit{vice versa}.

The peculiarity of the recursion is the generating function
\[
\varphi(z) = \prod_{l = 1}^{r} \vartheta (1,z^{a_l}), 
\]
expressed in terms of the product of Jacobi theta functions. Using the method of Sect. 2, we obtain 
\begin{eqnarray}
\nu _{r}(n) = \frac{1}{2\pi i}\oint \frac{dz}{nz^{n}}
\sum_{l=1}^{r} a_{l}z^{a_{l}-1} \frac{\vartheta ^{\prime }(1,z^{a_{l}})}{\vartheta (1,z^{a_{l}})}\varphi(z).
\label{DE2AA}
\end{eqnarray}
The triple product identity
\[
\vartheta (w,u)=\prod\limits_{m=1}^{\infty}(1-u^{2m})(1+w^{2}u^{2m-1})(1+w^{-2}u^{2m-1})
\]
allows to get the series expansion of the logarithmic derivative 
in the neighborhood of $u=0$: 
\begin{eqnarray*}
\frac{\vartheta ^{\prime }(1,u)}{\vartheta (1,u)} =
\sum_{m=1}^{+\infty }\sum_{s=0}^{+\infty }u^{2m(s+1)-1}\left((-2m)+2(2m-1)(-1)^{s}u^{-(s+1)}\right)
\end{eqnarray*}
Substituting this expression for $u = z^{a_l}$ in Eq.~(\ref{DE2AA}), we arrive at Eq.~(\ref{RE2}).
\end{proof}

\begin{example}
Using Eq.~(\ref{RE2}), we obtain for $a_l = 1$ $\nu_{2}(n) = 1,4,4,0,4,8,0,0,4,8$ and 
$\nu_{3}(n) = 1,6,12,8,6,24,24,0,12,30,24$ for $n = 0,1,\ldots,10$, 
which is in the agreement with the direct expansion of the generating functions.
\end{example}

\section{Diophantine equation of general additive form}

Now consider the general case:
\begin{equation}
g_{1}(k_{1})+\ldots +g_{r}(k_{r})=n,
\label{DE3}
\end{equation}
for $k_l \in \mathbb{N}_0$. The number of solutions, $\nu_r(n)$, is finite provided $g_{l}(k) \in \mathbb{N}_0$. 
We also consider the case of increasing functions: $\forall l$
$k<m$ iff $g_{l}(k)< g_{l}(m)$. Without loss of generality, one can assume $g_{l}(0) = 0$. 

\begin{theorem} Under the specified conditions the number of solutions of Eq.~(\ref{DE3})  
can be calculated from the recursion
\begin{equation}
\nu _{r}(n)=\frac{1}{n}\sum_{l=1}^{r}\sum_{m=1}^{n}\frac{1}{(m-1)!}%
K_{m}(c_{l1},\ldots ,c_{lm})\nu _{r}(n-m),
\label{RE3}
\end{equation}
where
\[
K_{n}(c_{l1},\ldots ,c_{ln})=\sum_{k=1}^{n}(-1)^{k-1}(k - 1)!B_{n,k}(1!c_{l1},\ldots,(n-k+1)!c_{ln-k+1}),
\]
$B_{n,k}(x_{1},\ldots,x_{n-k+1})$ are the partial Bell polynomials, and  
$c_{lk}$ are the expansion coefficients of the generating functions 
for the sequences of individual terms in Eq.~(\ref{DE3}):
\begin{equation}
\varphi _{l}(z)\equiv \sum_{k=0}^{\infty }z^{g_{l}(k)} = 1+\sum_{k=1}^{\infty } c_{lk} z^{k}.
\label{c}
\end{equation}
The initial conditions are as follows: $\nu _{r}(n)=0$ for $n<0$ 
and $\nu _{r}(0)=1$.
\end{theorem}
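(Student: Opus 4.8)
The plan is to follow the template established in the proofs of Theorems~1 and~2, replacing the specific generating functions by the general product $\varphi(z)=\prod_{l=1}^{r}\varphi_{l}(z)$ and then reducing the logarithmic derivative to Bell-polynomial coefficients. First I would write the number of solutions as a contour integral exactly as in Eqs.~(\ref{2a})--(\ref{3}): since each $g_{l}$ is strictly increasing with $g_{l}(k)\in\mathbb{N}_{0}$ and $g_{l}(0)=0$, the series $\varphi_{l}(z)=\sum_{k\geq 0}z^{g_{l}(k)}$ has coefficients in $\{0,1\}$, hence converges for $|z|<1$ and satisfies $\varphi_{l}(0)=1$. The summations over the $k_{l}$ decouple, giving $\nu_{r}(n)=\frac{1}{2\pi i}\oint \frac{dz}{z^{n+1}}\varphi(z)$ with $\varphi=\prod_{l}\varphi_{l}$.

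Next I would apply the Ward identity (\ref{recu}), valid here because the integration is again over $dz/z$ on the $U(1)$ group and $\varphi(0)=1$ makes $\ln\varphi$ analytic at the origin. Because the logarithm turns the product into a sum, $\frac{d}{dz}\ln\varphi=\sum_{l=1}^{r}\varphi_{l}'/\varphi_{l}$, and the integral becomes $\nu_{r}(n)=\frac{1}{2\pi i}\oint \frac{dz}{nz^{n}}\sum_{l=1}^{r}\frac{\varphi_{l}'(z)}{\varphi_{l}(z)}\varphi(z)$, the direct analogue of Eq.~(\ref{DE2AA}).

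The crux, and the only genuinely new step, is the power-series expansion of each logarithmic derivative. Writing $\ln\varphi_{l}(z)=\sum_{m\geq 1}\frac{K_{m}}{m!}z^{m}$ and differentiating gives $\varphi_{l}'/\varphi_{l}=\sum_{m\geq 1}\frac{K_{m}}{(m-1)!}z^{m-1}$, so it suffices to identify $K_{m}$ with the claimed alternating Bell-polynomial sum. This is the standard logarithmic-polynomial identity: if $\varphi_{l}=1+\sum_{k}c_{lk}z^{k}$, the corresponding exponential coefficients are $k!\,c_{lk}$, and the coefficients of $\ln\varphi_{l}$ are $K_{m}(c_{l1},\ldots,c_{lm})=\sum_{k=1}^{m}(-1)^{k-1}(k-1)!\,B_{m,k}(1!c_{l1},\ldots,(m-k+1)!c_{l,m-k+1})$. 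I expect this to be the main obstacle, since it requires invoking the Fa\`a di Bruno / Bell-polynomial machinery for the composition $\ln(1+u)$. One can establish it either by differentiating the relation $\varphi_{l}\cdot(\ln\varphi_{l})'=\varphi_{l}'$ to obtain a recursion for $K_{m}$ and matching it against the recursion satisfied by the partial Bell polynomials, or by direct substitution of the known expansion of $\ln(1+u)$ into $B_{m,k}$.

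Finally I would substitute this expansion back, interchange summation and integration (justified by convergence for $|z|<1$ and the finite order of the pole at $z=0$), and recognize $\frac{1}{2\pi i}\oint \frac{dz}{z^{n-m+1}}\varphi(z)=\nu_{r}(n-m)$. The series over $m$ truncates at $m=n$ because $\nu_{r}(n-m)=0$ for $m>n$, which yields Eq.~(\ref{RE3}). The initial conditions $\nu_{r}(0)=1$ and $\nu_{r}(n)=0$ for $n<0$ follow immediately, exactly as in the proofs of Theorems~1 and~2.
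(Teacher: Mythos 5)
Your proposal is correct, and its skeleton --- the contour-integral representation of $\nu_r(n)$, the Ward identity (\ref{recu}), the splitting of $\ln\varphi$ into the sum $\sum_l \ln\varphi_l$, and the term-by-term identification of $\frac{1}{2\pi i}\oint dz\, z^{m-n-1}\varphi(z)$ with $\nu_r(n-m)$ --- is exactly the paper's. Where you genuinely diverge is in how the central lemma, the expansion
\[
\frac{\varphi_l'(z)}{\varphi_l(z)}=\sum_{m\geq 1}\frac{K_m(c_{l1},\ldots ,c_{lm})}{(m-1)!}\,z^{m-1},
\]
is established. You invoke the Fa\`a di Bruno composition with the outer function $f(u)=\ln(1+u)$: since $f^{(k)}(0)=(-1)^{k-1}(k-1)!$ and the arguments of $B_{m,k}$ are the inner derivatives $j!\,c_{lj}$, the coefficient of $z^m$ in $\ln\varphi_l$ is $K_m/m!$ by the standard composition formula, and differentiation gives the display above. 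This is clean, elementary, and self-contained. The paper instead reaches the same identity by a detour: it represents the logarithm by the integral
\[
\ln\left(\varphi_l(z)e^{C}\right)=\int_0^\infty\frac{d\xi}{\xi}\left(-\exp(-\xi\varphi_l(z))+\frac{1}{1+\xi}\right),
\]
with $C$ the Euler constant, expands $\exp(-\xi\varphi_l(z))$ in partial Bell polynomials via the exponential generating identity, differentiates in $z$ under the integral, and then integrates over $\xi$, whereupon $\int_0^\infty\xi^{k-1}e^{-\xi}\,d\xi=(k-1)!$ produces exactly the factors $(-1)^{k-1}(k-1)!$ appearing in $K_m$. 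The two derivations prove the identical lemma; yours is shorter and rests only on the known Taylor coefficients of $\ln(1+u)$, while the paper's routes the combinatorial bookkeeping (signs and factorials) through a Gamma-function integral, in keeping with its theme of integral representations. Either way, the recursion (\ref{RE3}) and the initial conditions follow as you state, with the truncation at $m=n$ justified by the analyticity of $z^{m-n-1}\varphi(z)$ at the origin for $m>n$ (equivalently, $\nu_r(n-m)=0$ there).
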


\begin{proof}
The number of solutions is calculated by the same method as in the previous two
sections. 
The expansion coefficients of the generating functions (\ref{c}) are given by:
\[
c_{lk}=\left\{ 
\begin{array}{ll}
1, & \exists m > 0:k=g_{l}(m), \\ 
0, & \forall m > 0:k\neq g_{l}(m).
\end{array}
\right. 
\]

The logarithmic derivative of the full generating function 
\begin{equation}
\varphi (z)=\prod_{l=1}^{r}\varphi _{l}(z)
\label{char}
\end{equation}
enters the integral form of $\nu_r(n)$. For the logarithm of $\varphi_{l}(z)$, we have the following
representation
\begin{equation}
\ln \left( \varphi_{l}(z)e^{C}\right) =\int_{0}^{\infty }\frac{d\xi }{\xi }%
\left( -\exp (-\xi \varphi_{l} (z))+\frac{1}{1+\xi }\right) ,  \label{inte}
\end{equation}
where $C = 0.577\ldots$ is the Euler constant. The exponent is expanded in a neighborhood of $z=0$ 
in terms of the partial Bell polynomials \cite{BELL28}
\[
\exp (-\xi \varphi_{l} (z))=\exp (-\xi )\sum_{n=0}^{\infty }\frac{1}{n!}%
\sum_{k=1}^{n}(-\xi )^{k}B_{n,k}(1!c_{l1},\ldots ,(n-k+1)!c_{ln-k+1})z^{n}
\]
Taking the derivative on both sides of Eq. (\ref{inte}), 
changing the order of summation and integration, and integrating over $\xi $, 
we obtain
\[
\frac{\varphi_{l} ^{\prime }(z)}{\varphi_{l} (z)}=\sum_{n=1}^{\infty }\frac{1}{(n-1)!%
}K_{n}(c_{l1},\ldots ,c_{ln})z^{n-1}.
\]
The integral representation (\ref{recu}) gives then the recursion (\ref{RE3}).
\end{proof}

\begin{remark}
The above scheme is limited by the requirement of positive $g_{l}(k_{l})$. 
This constraint can be eliminated provided the additivity holds. Let us consider the equation
\begin{equation}
g_{1}(k_{1})+\ldots +g_{r}(k_{r})=g_{r+1}(k_{r+1})+\ldots +g_{r+s}(k_{r+s}),
\label{DE3P}
\end{equation}
with non-negative functions $g_{l}(k_{l})$ ($l = 1,\ldots,r+s$). 
If some functions in equation (\ref{DE3}) are negative, they can be placed in the right side, 
in which case we arrive at Eq.~(\ref{DE3P}).
The number of solutions of Eq.~(\ref{DE3P}) is the same as the number of solutions in the system of two equations, 
Eq.~(\ref{DE3}) and 
\begin{equation}
g_{r+1}(k_{r+1})+\ldots +g_{r+s}(k_{r+s}) = n,
\label{DE3PP}
\end{equation}
where the right side parameter is not fixed. We have compare solutions of equations (\ref{DE3}) and (\ref{DE3PP}) 
for all values of the parameter $n$. The number of solutions of these equations 
equals $\nu_r(n)$ and $\nu_s(n)$, respectively. 
The number of solutions of the system is the product $\nu_r(n) \nu_s(n)$ summed over all $n$, and it can diverge.
\end{remark}

\begin{example}
We illustrate the method by finding few lowest solutions of the equation
\begin{equation}
k_1^3 + k_2^3 = k_3^2.
\label{332}
\end{equation}
The right side is considered as a parameter $n$. 
If the number of solutions for a square $n$ is different from zero, 
we get the proof on the existence of the solution.
Bell polynomials are programmed as standard functions with Maple 15. 
Using the recursion (\ref{RE3}), we obtain $\nu_2(1) = \nu_2(8) = \nu_2(9) = 2$,
$\nu_2(2) = \nu_2(16) =1$ and $\nu_2(n) = 0$ in other cases, for $n = 1, \ldots, 50$. 
The first two solutions of Eq.~(\ref{332}) correspond to 
$k_1 = 1$, $k_2 = 2$, $n = 3^2$ and $k_1 = 2$, $k_2 = 2$, $n = 4^2$.
\end{example}

A slight modification of the arguments leads to 

\begin{theorem}
Let $d_{lk}$ be expansion coefficients of the logarithm of the generating functions (\ref{c}). 
The expansion coefficients of the logarithm of the full generating function
(\ref{char}) are given then by
\begin{equation}
d_{k} = \sum_{l=1}^{r}d_{lk},
\label{dsum}
\end{equation}
while the number of solutions of Eq.~(\ref{DE3}) is given by the $n$-th complete Bell polynomial:
\begin{equation}
\nu _{r}(n)=\frac{1}{n!}B_{n}(1!d_{1},\ldots ,n!d_{n}).
\label{c3}
\end{equation}
\end{theorem}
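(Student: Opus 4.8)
The plan is to build directly on the factorization (\ref{char}) together with the fact that each factor $\varphi_l$ is analytic near $z=0$ with $\varphi_l(0)=1$, so that $\ln\varphi_l$ is a well-defined power series with vanishing constant term. First I would write $\ln\varphi_l(z)=\sum_{k\ge1}d_{lk}z^{k}$, which is legitimate precisely because $\varphi_l(0)=1\ne0$. Taking the logarithm of the product (\ref{char}) turns it into a sum, $\ln\varphi(z)=\sum_{l=1}^{r}\ln\varphi_l(z)$, and comparing the coefficients of $z^{k}$ gives the additivity relation (\ref{dsum}), namely $d_k=\sum_{l=1}^{r}d_{lk}$. This step is immediate and carries no real difficulty.

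The substance lies in exponentiating back. I would recall that, by the contour representation used throughout the paper (cf. Eq.~(\ref{3})), the number of solutions $\nu_r(n)$ is exactly the $n$-th Taylor coefficient of $\varphi$ at $z=0$, i.e.\ $\varphi(z)=\sum_{n\ge0}\nu_r(n)z^{n}$. On the other hand, $\varphi(z)=\exp(\ln\varphi(z))=\exp\!\left(\sum_{k\ge1}d_k z^{k}\right)$. To match this against the complete Bell polynomial I would rewrite the exponent in the exponential-generating-function normalization used already in the proof of the previous theorem: setting $x_k=k!\,d_k$ turns $\sum_{k\ge1}d_k z^{k}$ into $\sum_{k\ge1}x_k z^{k}/k!$. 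The defining identity of the complete Bell polynomials,
\[
\exp\!\left(\sum_{k\ge1}x_k\,\frac{z^{k}}{k!}\right)=\sum_{n\ge0}\frac{1}{n!}\,B_n(x_1,\ldots,x_n)\,z^{n},
\]
then applies verbatim. Reading off the coefficient of $z^{n}$ and equating it with $\nu_r(n)$ yields (\ref{c3}), that is $\nu_r(n)=\frac{1}{n!}B_n(1!d_1,\ldots,n!d_n)$.

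The only points requiring care, which amount to the main obstacle here, are bookkeeping rather than conceptual. I must keep the factorial normalization straight: because the Bell generating function weights $x_k$ by $z^{k}/k!$, the correct argument is $k!\,d_k$ and not $d_k$, which is exactly what appears in (\ref{c3}). I would also note that every manipulation is valid as an identity of formal power series, or equivalently of functions analytic in a neighborhood of $z=0$ (since $\varphi(0)=1$ guarantees a zero-free neighborhood for the logarithm), so no convergence issue beyond that already handled in Sect.~2 arises. Finally, consistency with the preceding theorem can be checked by differentiating: $\frac{d}{dz}\ln\varphi=\sum_{l}\varphi_l'/\varphi_l$, whose coefficients are precisely the $K_n$ appearing there, so that $n\,d_n$ reproduces $\sum_{l=1}^{r}K_n(c_{l1},\ldots,c_{ln})/(n-1)!$ — a useful sanity check linking the partial-Bell recursion (\ref{RE3}) to the present closed form.
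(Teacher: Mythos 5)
Your proof is correct, and its decisive step---reading (\ref{c3}) off the defining identity $\exp\bigl(\sum_{k\ge1}x_k z^k/k!\bigr)=\sum_{n\ge0}B_n(x_1,\ldots,x_n)z^n/n!$ with $x_k=k!\,d_k$---is exactly the closing step of the paper's proof. Where you differ is the middle: the paper does not argue by formal power series alone, but writes the contour integral of both sides of the exp--log relation (\ref{c1}) and applies its Ward identity to derive the cumulant-type recursion $c_{ln}=d_{ln}+\frac{1}{n}\sum_{k=1}^{n-1}k\,d_{lk}c_{ln-k}$, i.e.\ Eq.~(\ref{c2}), before invoking the Bell expansion. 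That recursion is not logically necessary for (\ref{dsum}) or (\ref{c3})---your direct route demonstrates as much---but it is what makes the theorem effective: the $c_{lk}$ are the explicitly known $0/1$ coefficients determined by the functions $g_l$, and (\ref{c2}), being bilateral, converts them into the $d_{lk}$ in $O(rn^2)$ operations, which is the computational claim embedded in the paper's proof. Conversely, your treatment of (\ref{dsum}) (logarithm of the product, justified by $\varphi_l(0)=1$ so that $\ln\varphi_l$ is a well-defined series with no constant term) is more explicit than the paper's, which leaves the additivity essentially implicit; and your closing consistency check identifying $n\,d_n$ with $\sum_{l=1}^{r}K_n(c_{l1},\ldots,c_{ln})/(n-1)!$ from Theorem 3 is a genuine bonus linking the two closed forms. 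In short: same key lemma (the Bell generating identity), but you trade the paper's Ward-identity machinery for elementary power-series bookkeeping, gaining self-containedness and losing only the constructive recipe for computing the $d_{lk}$ from the $c_{lk}$.
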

\begin{proof}
The expansion of generating function in the neighborhood of $z=0$ 
determines the expansion of its logarithm: 
\begin{equation}
1+\sum_{k=1}^{\infty } c_{lk} z^{k} = \exp(\sum_{k=1}^{\infty } d_{lk} z^{k}).
\label{c1}
\end{equation}
We write the contour integral of both sides of this equation. The Ward identity leads to 
the relationship
\begin{equation}
c_{ln}=d_{ln}+\frac{1}{n}\sum_{k=1}^{n-1}k d_{lk}c_{ln-k},
\label{c2}
\end{equation}
which is commonly used in probability theory for calculation of cumulants. This recursion is bilateral:
it allows to find expansion coefficients of the left side of Eq.~(\ref{c1})
in terms of expansion coefficients of the right side, and \textit{vice versa}
(allowing thereby to compute $\nu _{r}(n)$ with at most $O(rn^2)$ operations).
Expanding the exponential representation of $\varphi(z)$ 
over the complete Bell polynomials, we obtain the expression (\ref{c3}).
\end{proof}

\begin{corollary}
Substituting the logarithmic derivative of (\ref{char}) into Eq.~(7), we obtain
\begin{equation}
\nu_{r}(n) = \frac{1}{n}\sum_{k=1}^{n}k d_{k}\nu_{r}(n - k).
\label{c5}
\end{equation}
\end{corollary}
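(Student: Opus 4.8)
The plan is to substitute the power series of the logarithmic derivative of the full generating function (\ref{char}) into the recursion of Eq.~(7), which is equivalent to evaluating the Ward identity (\ref{recu}) for $\varphi=\prod_{l=1}^{r}\varphi_{l}$. Two expansions feed the calculation. First, by Theorem~4 and relation (\ref{dsum}) the logarithm of the full generating function has the expansion $\ln\varphi(z)=\sum_{k=1}^{\infty}d_{k}z^{k}$ with $d_{k}=\sum_{l=1}^{r}d_{lk}$, so that differentiating term by term gives
\[
\frac{d\ln\varphi(z)}{dz}=\frac{\varphi'(z)}{\varphi(z)}=\sum_{k=1}^{\infty}k\,d_{k}\,z^{k-1}.
\]
Second, since (\ref{3}) identifies $\nu_{r}(n)$ with the coefficient of $z^{n}$ in $\varphi$, and $\varphi(0)=1=\nu_{r}(0)$, the generating function itself reads $\varphi(z)=\sum_{m=0}^{\infty}\nu_{r}(m)\,z^{m}$.

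First I would insert both series, letting the logarithmic derivative supply the weights $k\,d_{k}$ in place of the coefficients $\rho(m)$ that appear in the linear Corollary (\ref{21}). The product $\frac{d\ln\varphi}{dz}\,\varphi$ then becomes the Cauchy product
\[
\Bigl(\sum_{k=1}^{\infty}k\,d_{k}\,z^{k-1}\Bigr)\Bigl(\sum_{m=0}^{\infty}\nu_{r}(m)\,z^{m}\Bigr).
\]
The contour integral $\frac{1}{2\pi i}\oint \frac{dz}{n z^{n}}(\cdots)$ of the Ward identity extracts $1/n$ times the coefficient of $z^{n-1}$. Matching powers through $k-1+m=n-1$, i.e. $m=n-k$, collapses the double series to $\sum_{k\ge 1}k\,d_{k}\,\nu_{r}(n-k)$; the sum terminates at $k=n$ because $\nu_{r}(n-k)=0$ for $k>n$ by the initial conditions. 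This produces precisely (\ref{c5}).

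The mechanics are routine coefficient bookkeeping, so the care lies elsewhere. The points that genuinely need justification are the termwise interchange of the logarithmic-derivative expansion with the contour integration --- legitimate for $|z|$ inside the common radius of convergence, exactly as in the proofs of Sects.~2 and~4 --- together with the index shift $m=n-k$ forced by the extra factor $z^{k-1}$. It is also worth recording the conceptual content of the substitution: in the linear Corollary the weight $\rho(m)$ is the coefficient of $z^{m}$ in $z\,\frac{d\ln\varphi}{dz}$, and for the general generating function (\ref{char}) that same coefficient equals $m\,d_{m}$ by the expansion above, so (\ref{c5}) is the general-additive analogue of (\ref{21}) with $\rho(m)$ replaced by $m\,d_{m}$. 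I expect no deeper obstacle; the only conceptual trap is that the naive simplification $\frac{d\ln\varphi}{dz}\,\varphi=\varphi'$ would trivialise the identity, so the whole point is to keep the logarithmic-derivative series and the generating series separate, letting the residue produce a genuine convolution over the lower values $\nu_{r}(n-k)$.
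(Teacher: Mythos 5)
Your proposal is correct and follows exactly the route the paper intends: substituting the expansion $\frac{d\ln\varphi(z)}{dz}=\sum_{k\ge 1}k\,d_{k}z^{k-1}$ (from Theorem~4 and (\ref{dsum})) together with $\varphi(z)=\sum_{m\ge 0}\nu_{r}(m)z^{m}$ into the Ward identity (\ref{recu}) and reading off the coefficient of $z^{n-1}$. The coefficient bookkeeping, the truncation at $k=n$ via the initial conditions, and the observation that $m\,d_{m}$ generalizes $\rho(m)$ of Corollary~(\ref{21}) all match the paper's argument.
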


\begin{remark}
Equation (\ref{c3}) solves the recursions (\ref{RE1}), (\ref{RE2}), (\ref{RE3}), and (\ref{c5}).
These recirsions appear as identities for the complete Bell polynomials.
\end{remark}

\begin{acknowledgement}
The author is grateful to Dr. R. Mahmoudvand and Dr. A. Matvejevs for communication on Theorem 1.
This paper is devoted to the 70-th anniversary of Prof. Apolodor Raduta.
\end{acknowledgement}

\bibliographystyle{amsplain}

\end{document}